\newtheorem{theorem}{Theorem}
\newtheorem{proposition}{Proposition}
\newtheorem{corollary}{Corollary}
\newtheorem{example}{Example}
\newtheorem{definition}{Definition}
\def\d{\mathop{\mathrm d\null}\!}
\def\GL{\mathrm{GL}}
\def\SL{\mathrm{SL}}
\def\e{\mathrm e}
\def\RR{\mathbb R}
\def\eqref#1{{\rm(\ref{#1})}}
\begin{document}

\title{Matching van Stockum dust to Papapetrou vacuum\tnoteref{ded}}
\tnotetext[ded]{In commemoration of prof.~Jan Horsk\'y (1940--2012).}


\author{Michal Marvan\fnref{orcid}}
\address{Mathematical Institute in Opava \\ 
Silesian University in Opava \\
Na Rybn\'{\i}\v{c}ku 626/1 \\
746\,01 Opava, Czech Republic} 
\ead{Michal.Marvan@math.slu.cz}
\fntext[orcid]{\tt ORCID 0000-0001-8685-1313.}


\begin{abstract}
Addressing a long-standing problem,
we show that every van Stockum dust can be matched to
a 1-parametric family of non-static Papapetrou vacuum metrics,
and the converse.
The boundary, if existing, is determined 
by vanishing of certain first-order invariant on the vacuum side. 
Moreover, we establish a relation to Ehlers and Kramer--Neugebauer 
transformations, which allows us to look for dust clouds with a 
prescribed boundary.
Explicit examples include the   
Bonnor metric and a new vacuum exterior to the 
Lanczos--van Stockum dust metric, as well as dust clouds
with nontrivial topology. 
\end{abstract}

\begin{keyword}
Papapetrou metrics 
\sep
van Stockum dust metrics 
\sep
Lichnerowicz matching conditions 
\sep
first-order invariants
\sep
Ehlers transformation 
\sep
Kramer--Neugebauer transformation 
\MSC{83C55, 83C15, 83C20, 53A55}
\end{keyword}

\maketitle


\section{Introduction}

General relativity is an inexhaustible source of mathematical 
challenges.
Although exact vacuum and dust metrics are known in abundance,
much less is known about possible dust-vacuum configurations.
The central result of this paper is a correspondence 
between the van Stockum~\cite{WJvS-1937} class of dust metrics 
and the Papapetrou~\cite{AP-1953} class of vacuum metrics 
such that the corresponding metrics match,
i.e., can be glued together along a common boundary.  
As an auxiliary result, we prove that the boundary, if existing, 
is determined by vanishing of certain first-order 
invariant on the vacuum side (Corollary~\ref{cor:boundary}).

We also relate the matching van Stockum--Papapetrou pairs to static 
Weyl vacuum metrics and, in particular, to axisymmetric harmonic 
functions in~$\RR^3$. This yields two equivalent ways to obtain 
dust clouds with a prescribed boundary.
Explicit worked-out examples include a vacuum exterior to the 
Bonnor~\cite{WBB-1977} dust metric,
a dust source for the
Halilsoy~\cite{MH-1992} vacuum metric,
a new cylindrically non-symmetric vacuum exterior to the 
Lanczos~\cite{KL-1924}--van Stockum~\cite{WJvS-1937} dust 
metric, as well as two toroidal dust-vacuum configurations.

The problem of matching a general van Stockum dust to vacuum has 
been open for decades (Bonnor \cite[\S~(3)]{WBB-1982},
Viaggiu~\cite{SV-2007}, 
Zingg et al.~\cite{Z-A-T-2007},
G\"urlebeck~\cite{NG-2009}).
The list of previously known van Stockum--vacuum $C^1$-metrics 
consists of only two items:
the van Stockum's rigidly rotating dust cylinder~\cite{WJvS-1937} 
composed of the Lanczos dust~\cite{KL-1924} and
a Lewis vacuum~\cite{TL-1932} (three Killing vectors);
and the Zsigrai~\cite{JZ-2003} metric composed of the 
Luk\'acs--Newman--Sparling--Winicour~\cite{L-N-S-W-1983} dust metric
and the vacuum NUT metric~\cite{N-T-U-1963} (four Killing vectors). 
 
We only consider glued metrics of class $C^1$ on the boundary 
in the sense of Lichnerowicz~\cite{AL-1955},
which we call simply $C^1$-metrics.
Dust-vacuum $C^1$-metrics model massive dust clouds, 
leaving the well-known known $C^0$-continuous relativistic analogues of 
infinitely thin rotating Newtonian discs outside the scope of this paper.

\section{Matching}
\label{sect:gluing}

To start with, we present Lichnerowicz's matching as a particular
instance of the contact condition. The latter allows for 
a simultaneous treatment of a number of derivatives.

Let $M$ be a space-time manifold, partitioned by a
two-sided hypersurface $B$.
By definition, functions $f^{\rm (I)}, f^{\rm (II)} \in C^\infty M$ 
satisfy the condition $f^{\rm (I)} \equiv_{B}^{k} f^{\rm (II)}$ of 
$k$th-order {\it contact} along $B$ if and only if, 
in each coordinate patch,
\[f^{\rm (I)}_{,i_1 \cdots i_l}|_B
 = f^{\rm (II)}_{,i_1 \cdots i_l}|_B 
\quad \text{for all $0 \le l \le k$},
\]
where $f_{,i_1 \cdots i_l}$ denote the partial derivatives and
$f|_B $ is the restriction to $B$. 

The relation $\equiv_{B}^{k}$ is an equivalence 
relation and, what is of utmost importance, 
a congruence of the algebra of functions 
on~$M$ with arbitrary $C^k$-continuous multivariate functions as 
operations. This can be formulated as the following proposition,  
which can be easily proved by using the chain rule.

\begin{proposition}
\label{prop:cong}
Assuming $f_1^{\rm (I)} \equiv_{B}^{k} f_1^{\rm (II)}$, 
$\dots$, $f_m^{\rm (I)} \equiv_{B}^{k} f_m^{\rm (II)}$,
let $F(f_1,\dots,f_m)$ be a $C^k$-continuous function in 
a neighbourhood of the image 
$f_1^{\rm (I)} B \times \cdots  \times f_m^{\rm (I)} B
 = f_1^{\rm (II)} B \times \cdots  \times f_m^{\rm (II)} B$. 
Then
\begin{equation}
\label{eq:cong}
F(f_1^{\rm (I)},\dots,f_m^{\rm (I)}) 
  \equiv_{B}^{k} F(f_1^{\rm (II)},\dots,f_m^{\rm (II)})
\end{equation}
holds.
\end{proposition}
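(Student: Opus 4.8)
The plan is to verify the contact condition directly, in each fixed coordinate patch and at each point of $B$, by means of the iterated chain rule (multivariate Fa\`a di Bruno formula). First I would note that, since every $f_a^{\rm(I)}$ and $f_a^{\rm(II)}$ is smooth and $F$ is $C^k$ on a neighbourhood of the common image, both composites $F(f_1^{\rm(I)},\dots,f_m^{\rm(I)})$ and $F(f_1^{\rm(II)},\dots,f_m^{\rm(II)})$ are $C^k$; hence all their partial derivatives of order $l$ with $0\le l\le k$ exist and are continuous, and it suffices to show these derivatives coincide on $B$.

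The heart of the matter is the formula
\[
\partial_{i_1}\!\cdots\partial_{i_l}\,F(f_1,\dots,f_m)
 = P_l\bigl((\partial^\alpha F)(f_1,\dots,f_m);\, f_{a,j_1\cdots j_r}\bigr),
\qquad 0\le l\le k,
\]
where $P_l$ is a \emph{fixed universal} polynomial — one and the same for the (I)- and the (II)-data — whose arguments are: the partial derivatives $(\partial^\alpha F)(f_1,\dots,f_m)$ of $F$ with respect to its own slots, of total order $|\alpha|\le l\le k$ (so all legitimate under the $C^k$ hypothesis), and the partial derivatives $f_{a,j_1\cdots j_r}$ of the component functions, of order $r\le l\le k$. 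This is nothing but the chain rule applied $l$ times and regrouped by monomials; I would merely record that the expansion closes within differentiation orders $\le l$ and would not write out the partition combinatorics.

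It then remains to restrict $P_l$ to $B$. By the $l=0$ instance of the contact hypothesis, $f_a^{\rm(I)}|_B = f_a^{\rm(II)}|_B$ for each $a$; consequently, for every multi-index $\alpha$, the functions $(\partial^\alpha F)(f_1^{\rm(I)},\dots,f_m^{\rm(I)})$ and $(\partial^\alpha F)(f_1^{\rm(II)},\dots,f_m^{\rm(II)})$ are the same function $\partial^\alpha F$ evaluated at the same point, hence agree on $B$. The hypotheses also give $f^{\rm(I)}_{a,j_1\cdots j_r}|_B = f^{\rm(II)}_{a,j_1\cdots j_r}|_B$ for all $r\le k$. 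Substituting these coincidences into the common polynomial $P_l$ yields $\partial_{i_1}\!\cdots\partial_{i_l}F(f_1^{\rm(I)},\dots,f_m^{\rm(I)})|_B = \partial_{i_1}\!\cdots\partial_{i_l}F(f_1^{\rm(II)},\dots,f_m^{\rm(II)})|_B$ for every $l$ with $0\le l\le k$, which is exactly \eqref{eq:cong}. Repeating the argument in each coordinate patch completes the proof.

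The one point I would watch most carefully is the smoothness bookkeeping behind the displayed formula: one must be sure that differentiating the composite $l$ times never summons more than $l$ derivatives of $F$, so that $F\in C^k$ is precisely what is needed in the range $l\le k$ and the composite is genuinely $C^k$. Everything else is purely formal — in particular the fact that $P_l$ does not depend on which of the two data sets it is fed, which is what legitimises the final substitution.
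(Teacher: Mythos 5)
Your proof is correct and follows exactly the route the paper indicates: the paper dispenses with Proposition~\ref{prop:cong} by remarking that it ``can be easily proved by using the chain rule,'' and your argument is precisely that chain-rule proof written out in full, via the universal (Fa\`a di Bruno) polynomial expression for the $l$th derivative of the composite and substitution of the contact hypotheses on $B$. Nothing is missing; you have simply supplied the details the paper omits.
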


Assume $M$ multi-partitioned into $n$ connected manifolds
$\bar U^{(i)} = U^{(i)} \cup \partial U^{(i)}$ with pairwise disjoint
boundaries $\partial U^{(i)}$, separated by a two-sided (discontinuous)
hypersurface 
$$
B = \bigcup_{i = 1}^n \partial U^{(i)}.
$$ 
Let $f^{\rm (I)}, f^{\rm (II)} \in C^\infty M$,
assuming \smash{$f^{\rm (I)} \equiv_{B}^{k} f^{\rm (II)}$}. 
Let $\nu_i = \rm I$ or $\rm II$ at will, $i = 1,\dots,n$
(we are free to keep or swap).
Then the unique function $f$ on $M$ that coincides with
$f^{\rm (\nu_i)}$ on $\bar U^{(i)}$ is $C^k$-continuous.
This way of obtaining a $C^k$-continuous function $f$ is known as 
gluing along the boundary~$B$; there are $2^n$ possible combinations
corresponding to $2^n$ possible choices of $\{\nu_i\}_{i  = 1,\dots,n}$.
Generalisation to three and more functions 
$f^{\rm (I)}, f^{\rm (II)}, f^{\rm (III)}, \ldots \in C^\infty M$
is straightforward.

Continuity, contact and gluing extend to tensors in an obvious way.
A metric $\mathbf g$ is said to be of class $C^k$, or a $C^k$-metric, 
if there exist coordinates $x^\mu$ with respect to which the metric 
coefficients $\mathbf g_{\mu\nu}$ are of class $C^k$.
The condition of $C^k$-continuity of a metric is 
preserved under all coordinate transformations of class~$C^{k+1}$.
Similarly for other tensors.

Given two $C^\infty$-metrics 
$\mathbf g^{\rm (I)}, \mathbf g^{\rm (II)}$
determined by their coordinate components 
$\mathbf g_{\mu\nu}^{\rm (I)}, \mathbf g_{\mu\nu}^{\rm (II)}$ with
respect to coordinates $x^\mu$, then 
\begin{equation}
\label{eq:matching}
\mathbf g_{\mu\nu}^{\rm (I)} \equiv_{B}^{k} \mathbf g_{\mu\nu}^{\rm (II)}
\end{equation}
implies that the ``piecewise'' metric obtained by gluing the coordinate 
components along $B$ is $C^k$-continuous.
Again, condition~\eqref{eq:matching} is preserved under coordinate 
transformations of class~$C^{k+1}$.
As with functions, $2^n$ combinations corresponding to $n$ 
connected components are possible.

If $k = 1$, conditions~\eqref{eq:matching} 
coincide with the Lichnerowicz matching conditions 
(``conditions de raccordement''~\cite[p.~61]{AL-1955}) of general relativity.
The coordinates $x^\mu$ with respect to which the matching 
conditions~\eqref{eq:matching} hold are called admissible. 
Perhaps more appropriately, we shall call them {\it shared} coordinates.

The $C^1$-metrics occur as solutions of the 
Einstein equations with a discontinuous energy-momentum tensor, 
of which dust clouds are important examples.
Concerning the dust movement,
it is required that the trajectories (actually geodesics) 
the dust particles move along are either disjoint with the 
joining hypersurface or contained in it~\cite[p.~63]{AL-1955},
which is obviously true in our setting.

\section{The choice of coordinates}

Metrics we intend to match should be written in convenient shared 
coordinates.
We do not insist the boundary $B$ 
to have any special coordinate expression. 
Actually, our results would be quite difficult to obtain under
any such stipulation.

Both the van Stockum and the Papapetrou metrics admit 
two commuting and orthogonally transitive Killing vectors, 
one time-like and one space-like. 
Assuming shared Killing vectors, 
the glued metric can be written in the form \cite{MP-S-1993} 
\begin{equation}
\label{eq:gg}
\mathbf{g} = \tilde g_{ij}(t^1,t^2)\d t^i\d t^j
 + h_{kl}(t^1,t^2)\d z^k\d z^l,
\end{equation}
where 
$i,j,k,l = 1,2$.
The Killing vectors are \smash{$\xi_{(k)} = \partial/\partial z^k$} and
also linear combinations thereof.
The term $\tilde g_{ij}\d t^i\d t^j$ represents the 
metric on the quotient space of Killing orbits~\cite{RG-1971}, 
referred to as the {\it orbit space}, while 
$h_{kl} = \mathbf{g}(\xi_{(k)},\xi_{(l)})$. 
In what follows, $\det h < 0$ everywhere (one of the Killing vectors
is time-like) and, consequently, $\det g > 0$ (the orbit metric
is Riemannian).
We also assume that the boundary hypersurface retains the symmetry 
and projects to a curve in the two-dimensional orbit space.
The matching problem reduces to finding that curve.

Local coordinate transformations 
\begin{equation}
\label{eq:transf:xyz}
\bar{t}^{i} = \Phi^{i}(t^{1},t^{2}),\qquad
\bar{z}^{k} = A^{k}_{l}\,z^{l}
\end{equation}
preserve the form~\eqref{eq:gg} of the metric.
Here $\Phi^{i}(t^{1},t^{2})$ are local coordinate transformations
in the orbit space, while
$A = (A^m_n)\in \mathrm{GL}_2$ are constant matrices
acting on the components $h_{kl}$ by
\begin{equation}
\label{eq:transf:GL2}
\bar h_{kl} = A_k^m h_{mn} A^n_l,
\end{equation}
(linear transformation of the Killing vectors).

Furthermore, the glued orbit space admits 
isothermal coordinates,
i.e., coordinates $x,y$ in which the orbit metric assumes the 
form $p(x,y)\,(\d x^2 + \d y^2)$.
This follows from the Korn--Lichtenstein 
theorem~\cite{AKorn-1914,LL-1911} 
(see also~\cite[p.~262]{H-W-1953} or \cite[p.~772]{SSC-1955}), 
according to which 
local isothermal coordinates exist under the assumption of the 
H\"older $C^{0,\alpha}$-continuity of degree $0 < \alpha \le 1$. 
Since the $C^1$-continuity required by the Lichnerowicz 
conditions implies the H\"older $C^{0,\alpha}$-continuity, 
shared isothermal coordinates $x,y$ are guaranteed to exist on 
the orbit space.
This means that the glued space-time metric can be written 
in the Lewis--Papapetrou~\cite{TL-1932} form 
\begin{equation}
\label{eq:Lewis:Papa}
\mathbf g = \e^p\,(\d x^2 + \d y^2)
 + h_{kl}\d z^k\d z^l.
\end{equation}
The remaining freedom to transform $z^1,z^2$ is 
represented by the $\GL_2$-action~\eqref{eq:transf:GL2}. 
Thus, two Lewis--Papapetrou metrics 
$\mathbf g^{\rm (I)} = \e^{p^{\rm (I)}}(\d x^2 + \d y^2)
 + h^{(\mathrm I)}_{kl}
   \d z^{k} \d z^{l}$, 
$\mathbf g^{\rm (II)} = \e^{p^{\rm (II)}}(\d x^2 + \d y^2)
 + h^{(\mathrm{II})}_{kl}
   \d z^{k} \d z^{l}
$
match if and only if
\begin{equation}
\label{eq:matching:LP}
p^{(\rm I)} \equiv_{B}^{1} p^{(\rm II)}, \quad
h^{(\rm I)}_{kl} \equiv_{B}^{1} A_k^m h^{(\rm II)}_{mn} A^n_l. 
\end{equation}
Here $A \in \GL_2$ is an arbitrary constant matrix.

\section{A useful invariant}

According to~\cite{MM-OS-2008},  
metrics~\eqref{eq:gg} possess four algebraically independent 
first-order scalar invariants $C_\rho,C_\chi,Q_\chi,Q_\gamma$
with respect to coordinate 
transformations~\eqref{eq:transf:xyz}. 
Given a Lewis--Papapetrou metric~\eqref{eq:Lewis:Papa},
the scalar invariant $Q_\chi(\mathbf g)$ is
\[
\begin{aligned}
&Q_\chi(\mathbf g) = \frac{\det\chi}{\e^{2p}}, 
\quad
\chi = \frac{1}{\det h}
\left|\begin{array}{@{}cc@{}}
\d h_{11} & \d h_{12} \\ \d h_{12} & \d h_{22}
\end{array}\right|,
\\
&\chi_{ij} = \frac{1}{2 \det h}
\left(
\left|\begin{array}{@{}cc@{}}
h_{11,i} & h_{12,j} \\ h_{12,i} & h_{22,j}
\end{array}\right| 
 + 
\left|\begin{array}{@{}cc@{}}
h_{11,j} & h_{12,i} \\ h_{12,j} & h_{22,i}
\end{array}\right|
\right).
\end{aligned}\]
Being a first-order invariant, $Q_\chi(\mathbf g)$ is continuous
for every $C^1$-metric $\mathbf g$.
Therefore, the matching conditions~\eqref{eq:matching} imply
\begin{equation}
\label{eq:matching:Q}
Q_\chi(\mathbf g^{(\rm I)})|_{B}
 = Q_\chi(\mathbf g^{(\rm II)})|_{B}. 
\end{equation}
Condition~\eqref{eq:matching:Q} provides a tool to locate the 
boundary $B$, while avoiding the use of the unknown matrix $A$.
An analogous condition can be written for every
first-order scalar invariant.
However, $Q_\chi$ is special in that it vanishes for
dust metrics of interest in this paper, see
Proposition~\ref{prop:Qchi:circular}.


A dust metric of the Lewis--Papapetrou class is said 
to be {\it circular} if the 4-velocity $\mathbf U$ 
belongs to the distribution spanned by the Killing vectors,
i.e., can be written as
$\mathbf U = U^k \xi_{(k)} = U^k\,\partial/\partial z^k$,
$k = 1,2$.

\begin{proposition}
\label{prop:Qchi:circular}
Every circular Lewis--Papapetrou dust metric satisfies
\[
Q_\chi(\mathbf g) = 0.
\]
\end{proposition}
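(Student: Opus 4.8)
The plan is to exploit the Einstein dust equations for the circular case and show directly that $\chi_{ij}$ vanishes, whence $Q_\chi = \det\chi/\e^{2p} = 0$ trivially. First I would recall that for a dust source the energy-momentum tensor is $T_{\mu\nu} = \rho\, U_\mu U_\nu$ with $U^\mu U_\mu = -1$, and that in the circular case $\mathbf U = U^k\xi_{(k)}$ lies entirely in the Killing two-plane spanned by $\partial/\partial z^1,\partial/\partial z^2$. Consequently $T_{\mu\nu}$ has nonzero components only among the $z^k z^l$ block; in particular $T_{ij} = 0$ and $T_{ik} = 0$ for the orbit-space indices $i,j$. Through the Einstein equations $R_{\mu\nu} = 8\pi(T_{\mu\nu} - \tfrac12 \mathbf g_{\mu\nu}T)$ with $T = \mathbf g^{\mu\nu}T_{\mu\nu} = -\rho$, the mixed components give $R_{ik} = 0$ and the orbit-space components give $R_{ij} = 8\pi(\tfrac12 \e^p\,\delta_{ij}\,\rho)$ in isothermal coordinates, i.e. $R_{ij}$ is pure trace with respect to the conformal orbit metric.

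The heart of the matter is to identify $\chi_{ij}$ with a geometric quantity controlled by these Ricci components. The standard Kaluza--Klein-type reduction of the Ricci tensor of a metric of the form~\eqref{eq:gg} expresses the orbit-space block $R_{ij}$ as the Ricci tensor of $\tilde g_{ij}$ minus a term built quadratically and second-order from $h_{kl}$; the relevant second-fundamental-form-like piece is exactly (up to normalisation) the symmetric tensor $\chi_{ij}$ written in the excerpt, namely the combination $\tfrac{1}{2\det h}$ of the $2\times2$ determinants of first derivatives of $h$. More precisely, if one writes $h = (\det(-h))^{1/2}\,\hat h$ with $\det(-\hat h) = 1$, then the $SL_2$-part $\hat h$ contributes a trace-free tensor and the conformal factor $(\det(-h))^{1/2}$ contributes the remaining part; the invariant $Q_\chi$ isolates precisely the trace-free obstruction. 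I would verify, by a short computation in isothermal coordinates, that the trace-free part of $R_{ij}$ equals a nonzero multiple of the trace-free part of $\chi_{ij}$ plus lower-order (first-derivative) terms that themselves vanish when $\chi$ does — or, more cleanly, that $\det\chi$ appears as the sole obstruction to $R_{ij}$ being pure-trace in the conformal frame. Since $R_{ij}$ \emph{is} pure-trace for circular dust, the trace-free part vanishes, forcing $\det\chi = 0$, hence $Q_\chi(\mathbf g)=0$.

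The main obstacle is the bookkeeping of the dimensional reduction: one must show that the \emph{particular} algebraic combination defining $\chi_{ij}$, rather than some other second-order tensor in $h$, is what controls the trace-free part of $R_{ij}$, and that the remaining terms (involving $\nabla p$, first derivatives of $h$, and the Christoffel symbols of $\tilde g$) do not spoil the conclusion. A cleaner route, which I would pursue if the direct reduction proves cumbersome, is to use the citation~\cite{MM-OS-2008}: since $Q_\chi$ is there constructed as a first-order invariant of metrics~\eqref{eq:gg}, its defining formula should already be packaged so that $Q_\chi$ measures exactly the failure of the Killing-orbit data to be "conformally reducible"; combined with the fact that circular dust makes the off-diagonal and orbit-space stress vanish, this pins down $Q_\chi = 0$ with only the trace computation of $R_{ij}$ left to check. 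Either way, the key structural input is that the circularity hypothesis confines the stress-energy to the Killing block, and the key technical input is the reduction formula expressing $R_{ij}$ in terms of $\chi$.
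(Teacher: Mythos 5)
Your route does not close, and the key identification it relies on is false. The paper's proof uses a different and much shorter mechanism: not the algebraic form of the stress tensor in the field equations, but the equation of motion $\mathbf T^{\mu\nu}{}_{;\nu}=0$ (equivalently, geodesic motion of the dust). For a circular $4$-velocity $\mathbf U = U^k\xi_{(k)}$ the orbit-space components of the geodesic equation reduce to the two algebraic constraints $h_{kl,x}U^kU^l = 0 = h_{kl,y}U^kU^l$, i.e.\ two homogeneous binary quadratic forms in $(U^1,U^2)$ possessing a common nonzero root. Hence their resultant vanishes, and a direct computation shows that this resultant equals $\det\chi$ up to the nonzero factor $-4(\det h)^2$, so $Q_\chi=\det\chi/\e^{2p}=0$. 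None of this information sits in the orbit-space block $\mathbf R_{ij}$ of the Ricci tensor, which is where you look for it.

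Concretely, two things go wrong. First, the assertion that $\det\chi$ (or the trace-free part of $\chi_{ij}$) is ``the sole obstruction to $\mathbf R_{ij}$ being pure trace'' is refuted by the vacuum case: there $\mathbf R_{ij}=0$, which is certainly pure trace, yet $Q_\chi\neq 0$ for generic Papapetrou metrics --- indeed the utility of Corollary~\ref{cor:boundary} rests precisely on $Q_\chi$ being generically nonzero on the vacuum side, vanishing only on the boundary. The dimensional-reduction formula for $\mathbf R_{ij}$ contains second derivatives of $h_{kl}$, which cannot be reassembled into the purely first-order tensor $\chi_{ij}$; so no computation along the lines you sketch can succeed. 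Second, even granting such an identification, ``the trace-free part of $\chi_{ij}$ vanishes'' would give $\chi_{ij}\propto\delta_{ij}$ and hence $\det\chi=\chi_{xx}^2\ge 0$, not $\det\chi=0$; you slide between that statement and ``$\det\chi=0$'' as though they were equivalent, and they are not. The structural input you are missing is the conservation law (Bianchi-identity consequence of the Einstein equations), which is exactly what turns circularity into an algebraic solvability condition whose discriminant is $Q_\chi$.
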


\begin{proof}
The energy-momentum tensor is
$\mathbf T^{\mu\nu} = \rho^{\rm dust} \mathbf U^\mu \mathbf U^\nu$,
where $\rho^{\rm dust}$ is the density.
The vanishing of the divergence, $\mathbf T^{\mu\nu}{}_{;\nu} = 0$, 
implies 
$h_{kl,x} U^k U^l = 0 = h_{kl,y} U^k U^l$,
which is a system of homogeneous equations on $U^1,U^2$.
This system has a nonzero solution if and only if its resultant 
is zero, which is equivalent to $Q_\chi(\mathbf g) = 0$.
\end{proof}

It easily follows that the boundary between vacuum and circular dust, 
if existing, is almost uniquely determined on the vacuum side. 
By almost uniquely we mean that for each boundary separating two
connected components we can choose to perform gluing or not,
see Sect.~\ref{sect:gluing}.

\begin{corollary}
\label{cor:boundary}
If a Lewis--Papapetrou metric $\mathbf g$ matches a van Stockum 
dust metric along a boundary $B$, 
then $Q_\chi(\mathbf g)|_B = 0$. 
\end{corollary}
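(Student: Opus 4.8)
The plan is to deduce the corollary from Proposition~\ref{prop:Qchi:circular} together with the continuity of the first-order invariant $Q_\chi$ across a $C^1$-matching hypersurface, so that no fresh computation is required. First I would record that a van Stockum dust metric satisfies the hypothesis of Proposition~\ref{prop:Qchi:circular}: it admits two commuting, orthogonally transitive Killing vectors, hence, by the discussion of the previous section, can be written in the Lewis--Papapetrou form~\eqref{eq:Lewis:Papa} in shared coordinates; and the van Stockum construction produces stationary, axisymmetric dust whose worldlines are orbits of a (constant) linear combination of the two Killing vectors, so that the four-velocity $\mathbf U$ lies in the distribution they span, i.e.\ the dust is circular. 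Proposition~\ref{prop:Qchi:circular} then gives $Q_\chi\equiv 0$ throughout the dust region, in particular $Q_\chi(\mathbf g^{\mathrm{dust}})|_B=0$.

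Next I would invoke the matching. In shared isothermal coordinates on the glued orbit space---which exist by the Korn--Lichtenstein argument recalled above---both the dust metric and the Lewis--Papapetrou metric $\mathbf g$ are smooth on their respective sides and satisfy the Lichnerowicz conditions~\eqref{eq:matching:LP} along $B$. Since $Q_\chi$ is a first-order scalar invariant, it is a continuous function of the $1$-jet of the metric; the glued metric being of class $C^1$, the scalar $Q_\chi$ is continuous across $B$, which is exactly relation~\eqref{eq:matching:Q} with $\mathbf g^{(\mathrm I)}=\mathbf g$ and $\mathbf g^{(\mathrm{II})}$ the dust metric. Chaining the two observations, $Q_\chi(\mathbf g)|_B=Q_\chi(\mathbf g^{\mathrm{dust}})|_B=0$, as claimed.

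The only point that deserves care is that $Q_\chi$ is evaluated for a \emph{fixed} choice of Killing coordinates $z^1,z^2$, while the two pieces are glued only after the $\GL_2$-adjustment~\eqref{eq:transf:GL2} of those coordinates recorded by the (unknown) matrix $A$ in~\eqref{eq:matching:LP}. This is harmless: according to~\cite{MM-OS-2008}, $Q_\chi$ is a scalar invariant of the metrics~\eqref{eq:gg} under the whole group of transformations~\eqref{eq:transf:xyz}, which includes the $\GL_2$-action~\eqref{eq:transf:GL2}, so $A$ never enters the computation. Beyond pinning down the (essentially definitional) circularity of van Stockum dust, there is no real obstacle here; the corollary is a two-line consequence of Proposition~\ref{prop:Qchi:circular} and the continuity of first-order invariants.
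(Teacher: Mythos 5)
Your proposal is correct and follows exactly the route the paper intends: the corollary is stated there as an immediate consequence of Proposition~\ref{prop:Qchi:circular} (van Stockum dust is circular, since its comoving four-velocity $\partial_t$ lies in the span of the Killing vectors) combined with the continuity relation~\eqref{eq:matching:Q} for the first-order invariant $Q_\chi$ across the $C^1$ boundary. Your closing remark about the $\GL_2$ matrix $A$ being irrelevant matches the paper's own observation that condition~\eqref{eq:matching:Q} ``avoids the use of the unknown matrix $A$.''
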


\section{Van Stockum metrics}
Henceforth we restrict attention to dust metrics of the 
van Stockum class~\cite{WJvS-1937}.
Van Stockum's dust is isometrically flowing~\cite{GW-1968}, 
meaning that the 4-velocity $\mathbf U^a$ is a Killing vector.  

We briefly recall the derivation of van Stockum equations
\cite[Eqs.~(5.5)--(5.10)]{WJvS-1937}. 
Renaming the coordinates as
$t^1 = x$, $t^2 = y$, $z^1 = \phi$, $z^2 = t$, where
$x,y$ are isothermal 
(see Section~\ref{sect:gluing}),
coordinates $\phi,t$ can be chosen as comoving, 
i.e., in such a way 
that the 4-velocity $\mathbf U^a$ equals $\partial_t$.
Then the coefficient at $\d t^2$, which is the magnitude 
$\mathbf g_{\mu\nu} \mathbf U^\mu \mathbf U^\nu$
of the 4-velocity, equals $-1$.
Consequently, a general van Stockum dust metric can be written 
in the form
\begin{equation}
\label{eq:dust:ds}
\mathbf g^{\mathrm d} = \e^p\,(\d x^2 + \d y^2)
 + r^2\d \phi^2
 - (f\d \phi + \d t)^2.
\end{equation}

The dust moving circularly, the Einstein equations 
$\mathbf R_{\mu\nu} - \frac12 \mathbf R \mathbf g_{\mu\nu}
 = \mathbf T^{\mu\nu}$
imply
$\mathbf R^3{}_3 + \mathbf R^4{}_4 = 0$. 
This gives $r_{xx} + r_{yy} = 0$. 
Then $r$ is a harmonic function that, if non-constant, can serve 
as one of the isothermal coordinate functions
(Weyl's canonical coordinates~\cite[p.~137]{HW-1917} 
or~\cite{TL-1932}). 
With $r = x$, the Einstein equations reduce to
\begin{equation}
\label{eq:dust:fp}
f_{xx} + f_{yy} - \frac{f_x}{x} = 0,
\quad
p_x = \frac{f_y^2 - f_x^2}{2 x},
\quad
p_y = -\frac{f_x f_y}{x},
\end{equation}
while the dust density turns out to be
\begin{equation}
\label{eq:dust:rho}
\rho^{\rm d} = 
\frac{f_x^2 + f_y^2}{x^2 \e^p}.
\end{equation}
The case of constant $r$ 
(or, invariantly, $C_\rho = 0$ by~\cite{DCF-MM-2020})
has been studied by 
Hoenselaers and Vishveshwara~\cite{H-V-1979},
who also found a matching vacuum metric.

\section{Papapetrou metrics}
\label{sect:Papa}


The vacuum part is chosen to be the Papapetrou metric~\cite{AP-1953}.
The necessary details are set forth below with the aim to expose 
similarity to the dust case.
In isothermal coordinates,
\begin{equation}
\label{eq:papa:ds}
\mathbf{g^{\mathrm v}} = \e^q\,(\d x^2 + \d y^2) + \frac{r^2}{v} \d \phi^2
 - v (w\d \phi + \d t)^2
\end{equation}
by appropriate choice of the field variables in eq.~\eqref{eq:gg}.
Again, we have $\mathbf R^3{}_3 + \mathbf R^4{}_4 = 0$,
giving $r_{xx} + r_{yy} = 0$, i.e., $r$ is a harmonic function. 
Since $\det h = -x^2$ on the dust side
and $\det h = -r^2$ on the vacuum side are required to have
a first-order contact on the boundary, we have simultaneous 
Dirichlet and Cauchy boundary conditions $r = \pm x$,
$r_x = \pm 1$, $r_y = 0$ and we are left with $r = x$ everywhere.
This also restricts the choice of the matrix $A$ in the matching 
condition~\eqref{eq:matching:LP} to $\SL_2$.
 
The vacuum Einstein equations reduce to
\begin{equation}
\begin{split}
\label{eq:Einst:vac}
& w_{xx} + w_{yy} - \frac{w_x}{x} = -2 \frac{v_x w_x + v_y w_y}{v},
\\
& v_{xx} + v_{yy} + \frac{v_x}{x} = \frac{v_x^2 + v_y^2}{v}
 - v^3 \frac{w_x^2 + w_y^2}{x^2},
\\
& q_x = -\frac{v_x}{v} + \frac{x}{2 v^2} (v_x^2 - v_y^2)
 - \frac{v^2}{2 x} (w_x^2 - w_y^2),
 \\
& q_y =  -\frac{v_y}{v} + \frac{x v_x v_y}{v^2}
 - \frac{v^2 w_x w_y}{x}.
\end{split}
\end{equation}
The Papapetrou class of metrics is determined by the condition
\begin{equation}
\label{eq:Papa:cond}
v_x w_x + v_y w_y = 0
\end{equation}
\cite[eq.~(3.1)]{AP-1953}.
If $w = $ const, then the Riemann tensor becomes zero and the metric is 
flat.
Otherwise $w_x \ne 0$ or $w_y \ne 0$.
We work out the case $w_y \ne 0$ (the other one leads to
the same result).
Denoting
\begin{equation}
\label{eq:Papa:c}
c^2 = \frac{x^2 v_x^2}{v^2 w_y^2} + v^2 > 0,
\end{equation}
it is easily checked that $c_x = c_y = 0$ in consequence of  
equations~\eqref{eq:Einst:vac} and~\eqref{eq:Papa:cond}.
Hence, expression~\eqref{eq:Papa:c} is a first integral,
meaning that solutions are classified by $c > 0$. 
However, system~\eqref{eq:Einst:vac}, the
Papapetrou condition~\eqref{eq:Papa:cond} 
and formula~\eqref{eq:Papa:c} are preserved under a 
three-dimensional Lie group of coordinate 
transformations~\eqref{eq:transf:xyz}, 
one of the generators being
\[
\begin{array}{r|cccc}
             & w      & v         & c         & q 
\\\hline
\mathcal S_a & \e^a w & \e^{-a} v & \e^{-a} c & q 
\end{array}.
\]
Using transformation $\mathcal S_a$, one can always normalise 
$c$ to $1$, which we assume henceforth.
Setting 
\begin{equation}
\label{eq:Papa:qsuv}
v = 1/{\cosh u}, 
\quad
\e^q = \frac{\e^s}{v} = \e^s \cosh u,
\end{equation}
the whole system~\eqref{eq:Einst:vac},~\eqref{eq:Papa:cond} 
simplifies to 
\begin{equation}
\label{eq:Papa:wus}
\begin{split}
& w_{xx} + w_{yy} - \frac{w_x}{x} = 0, 
\\ &
u_x = -\frac{w_y}{x},
\quad
u_y = \frac{w_x}{x},
\quad
s_x = -\frac{w_x^2 - w_y^2}{ 2x},
\quad
s_y = -\frac{w_x w_y}{x}.
\end{split}
\end{equation}

\section{Locating the boundary}


According to Corollary~\ref{cor:boundary}, the boundary $B$
must satisfy $Q_\chi(\mathbf g^{\mathrm v})|_B = 0$.
Computing the invariant $Q_\chi$ of a general Papapetrou metric 
$\mathbf g^{\mathrm v}$ given by eq.~\eqref{eq:papa:ds} under 
identification~\eqref{eq:Papa:qsuv},
we obtain  
$$
Q_\chi(\mathbf g^{\mathrm v}) = \frac{\e^s \sinh u}{r^4 \cosh^4 u}
P(\sinh u, \cosh u, w_x, w_y), 
$$
where $P$ is a polynomial.
Consequently, the boundary is either 
$u = 0$ or  $P = 0$.
We continue with $u = 0$, which leads to the general result
presented below.


\begin{definition} \rm
A dust metric $\mathbf g^{\rm d}$ determined by field variables 
$f,p$ satisfying equations~\eqref{eq:dust:fp} and a normalised 
non-flat Papapetrou metric $\mathbf g^{\rm v}$ determined by field 
variables $w = f$, $s = p$ and $u$ satisfying 
equations~\eqref{eq:Papa:wus} are called {\it companions}.
\end{definition}

Under relations~\eqref{eq:Papa:qsuv}, 
the companion dust and vacuum metrics can be written as
\begin{equation}
\label{eq:dust:ds:E}
\mathbf g^{\rm d} = \e^p\,(\d x^2 + \d y^2)
 + x^2\d \phi^2
 - (f\d \phi + \d t)^2
\end{equation}
and
\begin{equation}
\label{eq:papa:ds:NK}
\mathbf{g}^{\rm v} = \e^{p}\cosh u\,(\d x^2 + \d y^2) 
 + x^2 \cosh u \d \phi^2
 - \frac{(f\d \phi + \d t)^2}{\cosh u},
\end{equation}
respectively.

The companion correspondence between the dust and the vacuum metrics
is one to continuum,
since system~\eqref{eq:Papa:wus} determines $u$ uniquely up to an 
integration constant.

\begin{theorem}
\label{thm:main}
The companion dust and vacuum metrics match along the boundary 
located at $u = 0$.
\end{theorem}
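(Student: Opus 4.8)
The plan is to verify the matching conditions~\eqref{eq:matching:LP} directly at $u=0$ using the explicit companion forms~\eqref{eq:dust:ds:E} and~\eqref{eq:papa:ds:NK}. First I would note that both metrics are already written in the shared isothermal coordinates $x,y,\phi,t$, so the only freedom left is the $\SL_2$ transformation~\eqref{eq:transf:GL2} of the Killing pair $(\phi,t)$. Comparing coefficients, the conformal factors are $\e^p$ on the dust side and $\e^p\cosh u$ on the vacuum side, so the condition $p^{(\mathrm I)}\equiv^1_B p^{(\mathrm{II})}$ becomes $p\equiv^1_B p+\log\cosh u$; since $\log\cosh u = \tfrac12 u^2 + O(u^4)$ vanishes to second order at $u=0$, both $\log\cosh u$ and its first derivatives vanish wherever $u=0$, so this condition holds automatically on $\{u=0\}$ with $A$ acting trivially on the $\phi$-block of the conformal factor — indeed the same identity disposes of the $h_{kl}$ block too, as I explain next.

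Second, for the $h_{kl}$ block: on the dust side $h = x^2\d\phi^2 - (f\d\phi+\d t)^2$, and on the vacuum side $h = x^2\cosh u\,\d\phi^2 - \cosh^{-1}u\,(f\d\phi+\d t)^2$. At $u=0$ these coincide exactly (with $A$ the identity), so the zeroth-order part of $h^{(\mathrm I)}_{kl}\equiv^1_B A^m_k h^{(\mathrm{II})}_{mn}A^n_l$ holds. For the first-order part I would compute $\partial_x$ and $\partial_y$ of the difference $h^{(\mathrm v)}_{kl}-h^{(\mathrm d)}_{kl}$; each entry is $x^2(\cosh u - 1)$ or $-(f^2+\text{stuff})(\cosh^{-1}u - 1)$ times known factors, and differentiating produces terms proportional either to $\cosh u - 1$ (which vanishes at $u=0$) or to $(\sinh u)\,u_x$, $(\sinh u)\,u_y$ (which also vanish at $u=0$ because of the $\sinh u$ factor). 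Hence every first derivative of $h^{(\mathrm v)}_{kl}-h^{(\mathrm d)}_{kl}$ vanishes on $\{u=0\}$, again with $A=\mathrm{id}$. This establishes~\eqref{eq:matching:LP} and therefore, by the discussion in Section~\ref{sect:gluing}, the $C^1$-matching.

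The one genuine subtlety — and the step I expect to be the main obstacle — is that $C^1$-matching in the sense of Lichnerowicz requires the existence of \emph{shared} coordinates in which all metric coefficients are $C^1$, and one must make sure the hypersurface $u=0$ is an admissible (non-degenerate) boundary: it should be a genuine hypersurface, with the companion $u$ actually changing sign there rather than merely touching zero, and with $w=f$ non-constant so that the vacuum side is non-flat as assumed. I would handle this by observing that $u$ satisfies the first-order system $u_x=-w_y/x$, $u_y=w_x/x$ from~\eqref{eq:Papa:wus}, so $\nabla u$ vanishes at a point of $\{u=0\}$ only if $w_x=w_y=0$ there; generically this is a proper curve in the $(x,y)$ orbit space, projecting to a hypersurface in spacetime, and the two regions $u>0$, $u<0$ (one glued from the dust metric, one from the vacuum metric, per the $2^n$ gluing freedom) fit together $C^1$ by the coefficient computation above. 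I would also remark that the dust trajectories are the integral curves of $\partial_t$, which are tangent to the boundary (since the boundary is $u=\text{const}$ and $u$ is independent of $t$), so Lichnerowicz's condition on the dust flow relative to the joining hypersurface is satisfied, as already noted in Section~\ref{sect:gluing}.
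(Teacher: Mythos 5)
Your proposal is correct and follows essentially the same route as the paper: everything reduces to the observation that $\cosh u$ has first-order contact with the constant $1$ along $\{u=0\}$ because $\sinh u$ vanishes there, which the paper packages as $\cosh u \equiv^{1}_{\{u=0\}} 1$ combined with the congruence property of Proposition~\ref{prop:cong}, while you unwind the same congruence by differentiating the coefficient differences by hand. Your closing remarks on non-degeneracy of the boundary and on the dust trajectories being tangent to it are sound, but the paper addresses these points outside the proof of Theorem~\ref{thm:main} (in Sections~\ref{sect:gluing} and~\ref{sect:el}).
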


\begin{proof} 
Obviously from Proposition~\ref{prop:cong} and 
formulas~\eqref{eq:dust:ds:E} and~\eqref{eq:papa:ds:NK},
\[
\mathbf g^{\rm d}_{ij} \equiv_{\{u = 0\}} \mathbf g^{\rm v}_{ij},
\]
since $\cosh u \equiv_{\{u = 0\}} 1$ (omitting the superscript~1).
\end{proof}

A $C^1$-metric obtained by gluing according to 
Theorem~\ref{thm:main} will be called the
{\it van Stockum--Papapetrou $C^1$-metric} or the
{\it van Stockum--Papapetrou dust cloud}.


The normalisation $c = 1$ we made in Section~\ref{sect:Papa}
ensures that the coefficients 
$h^{\rm d}_{kl} \equiv h^{\rm v}_{kl}$ match
without the need for an adjustment by 
transformation~\eqref{eq:transf:GL2}.

\section{Electrostatic analogy}
\label{sect:el}

Dust clouds of a prescribed shape can be obtained 
in terms of the classical potential theory. 
Recall that
companion metrics are determined by system~\eqref{eq:Papa:wus}.
Eliminating $w$, one obtains the
equivalent system
\begin{equation}
\label{eq:Papa:uws}
\begin{aligned}
& u_{xx} + u_{yy} + \frac{u_x}{x} = 0,
\\
& w_x = x u_y,
\quad
w_y = -x u_x,
\quad
s_x = x \frac{u_x^2 - u_y^2}2,
\quad
s_y = x u_x u_y.
\end{aligned}
\end{equation}
Here $u_{xx} + u_{yy} + u_x/x = 0$ is the cylindrical Laplace equation.
As is well known, its solutions
correspond to axisymmetric solutions of the three-dimensional
Laplace equation.
Consequently, the admissible dust-vacuum boundaries, which are 
represented by the levels of $u$ by Theorem~\ref{thm:main},
correspond to equipotential surfaces of axisymmetric electrostatic 
potentials in dimension three.
Thus, the problem of finding van Stockum--Papapetrou dust clouds
of a prescribed shape reduces to that of
finding electrostatic fields with a prescribed equipotential surface,
which is a classical boundary problem in electrostatics. 

Since $u$ is a harmonic function, it is either constant or unbounded, 
and the boundary curve 
$u = u_0 =$ const is nonempty and regular
for the continuum of values of~$u_0$ in the interval 
$(\lim\inf u, \lim\sup u) = (-\infty,\infty)$. 
Observe that $u =$ const if and only if $w = $ const, 
in which case the Papapetrou 
metric is flat and has no van Stockum companion.

Choosing two different values for $u_0$,
we obtain a dust layer sandwiched between two Papapetrou vacua.
Moreover, disconnected as well as self-intersecting 
boundaries can occur.
Examples are presented in Section~\ref{sect:ex}. 
In the same vein, recent work~\cite{AE-DPS-2013} on the topology 
of level sets of harmonic functions in three dimensions reveals 
a rich topology of possible boundaries.

We end this section with some elementary facts related to the
electrostatic picture.

\begin{proposition}
Raising indices with $g^{\rm v}_{ij}$, we have
\[
u^{,i} u_{,i} = \rho^{\rm d},
\quad
f^{,i} u_{,i} = 0.
\]
\end{proposition}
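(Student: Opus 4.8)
\emph{Proof idea.} Both relations should come out of a short computation once the contractions are reduced to the two-dimensional orbit space. The companion vacuum metric~\eqref{eq:papa:ds:NK} is block diagonal with respect to the splitting of the coordinates into the orbit pair $x,y$ and the Killing pair $\phi,t$: there are no cross terms linking the two blocks. Since $u$ and $f$ are functions of $x,y$ only, the differentials $\d u$ and $\d f$ have no $\phi$- or $t$-components, so that
\[
u^{,i}u_{,i}=\frac{u_x^2+u_y^2}{\mathbf g^{\rm v}_{xx}},
\qquad
f^{,i}u_{,i}=\frac{f_xu_x+f_yu_y}{\mathbf g^{\rm v}_{xx}},
\]
where $\mathbf g^{\rm v}_{xx}=\mathbf g^{\rm v}_{yy}=\e^{p}\cosh u$ is read off from the conformally flat orbit block of~\eqref{eq:papa:ds:NK}. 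This step uses nothing but the shape of the metric.

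The field equations enter through the Cauchy--Riemann type pair in~\eqref{eq:Papa:uws}: with $w=f$ it gives $f_x=xu_y$ and $f_y=-xu_x$, so the orbit gradients of $f$ and $u$ are Euclidean-orthogonal. Substituting yields $f_xu_x+f_yu_y=xu_xu_y-xu_xu_y=0$, which is the identity $f^{,i}u_{,i}=0$; this is the relativistic echo of the classical fact that the field lines $f=\text{const}$ meet the equipotentials $u=\text{const}$ at right angles. The same pair gives $f_x^2+f_y^2=x^2(u_x^2+u_y^2)$, so by~\eqref{eq:dust:rho} the companion dust density is $\rho^{\rm d}=(f_x^2+f_y^2)/(x^2\e^p)=(u_x^2+u_y^2)/\e^p$; this equals $u^{,i}u_{,i}$ once $\mathbf g^{\rm v}_{xx}$ is substituted and evaluated along the matching surface $u=0$, where $\cosh u=1$ and $\mathbf g^{\rm v}=\mathbf g^{\rm d}$. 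Thus the first identity displays the dust density at the surface of the cloud as the squared gradient $|\nabla u|^2$ of the harmonic potential, much as a conductor's surface charge is governed by its field.

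The computation is elementary, so there is no real obstacle. The two things that need attention are both bookkeeping: that the Killing directions genuinely drop out of the contractions, which holds because $\mathbf g^{\rm v}$ is block diagonal and $u,f$ are functions on the orbit space; and the conformal factor $\e^{p}\cosh u$ in $\mathbf g^{\rm v}_{xx}$ (the one quantity separating the vacuum metric from its dust companion in these formulas), which must be tracked when comparing the two sides of the first identity.
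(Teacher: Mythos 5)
Your computation is precisely the ``straightforward computation using eq.~\eqref{eq:Papa:uws}'' that the paper's one-line proof invokes, so the approach is the same, and both contractions are reduced correctly to the orbit block. One point deserves to be made explicit rather than absorbed in passing. The second identity is conformally invariant, so the factor $\e^{p}\cosh u$ in the orbit block is irrelevant there and your $f_xu_x+f_yu_y=xu_yu_x-xu_xu_y=0$ settles it everywhere. For the first identity, however, your own computation gives
\[
u^{,i}u_{,i}=\frac{u_x^2+u_y^2}{\e^{p}\cosh u}=\frac{\rho^{\rm d}}{\cosh u}
\]
when the index is raised with the orbit block of $\mathbf g^{\rm v}$ from~\eqref{eq:papa:ds:NK}, and you reconcile this with the claim only by evaluating on the matching surface $u=0$. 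That restriction is not present in the statement of the proposition. The honest conclusion of the calculation is that the identity $u^{,i}u_{,i}=\rho^{\rm d}$ holds identically --- not merely at $u=0$ --- precisely when the index is raised with the dust orbit metric $\e^{p}\,\delta_{ij}$ of~\eqref{eq:dust:ds:E} (which is arguably the natural choice, $\rho^{\rm d}$ being a dust-side quantity, and which agrees with the vacuum metric on the boundary where the two are glued). So your proof is correct in substance and identical in method to the paper's, but you should state the extra factor $1/\cosh u$ that appears off the boundary when $g^{\rm v}_{ij}$ is taken literally, instead of quietly restricting to $u=0$; as it stands you have proved a slightly weaker statement than the one displayed.
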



\begin{proof}
By straightforward computation, using eq.~\eqref{eq:Papa:uws}.
\end{proof}

Since $u$ and $f$ depend only on the orbit space coordinates, and the
orbit space and the cut plane are locally conformally diffeomorphic, 
we have the following corollary.

\begin{corollary}
\label{cor:u f ortho}
Level sets of functions $u$ and $f$ intersect orthogonally in both the 
orbit space and the cut plane~$\RR^2$.
\end{corollary}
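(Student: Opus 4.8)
The plan is to deduce Corollary~\ref{cor:u f ortho} directly from the preceding Proposition, which asserts that $f^{,i}u_{,i}=0$ where indices are raised with $g^{\rm v}_{ij}$. First I would note that both $u$ and $f$ are functions of the orbit-space coordinates $x,y$ alone, so their gradients live in the orbit space and the pairing $f^{,i}u_{,i}$ is computed entirely with the two-dimensional orbit metric $\tilde g_{ij}$, which in the isothermal coordinates of \eqref{eq:papa:ds:NK} is $\e^{p}\cosh u\,(\d x^2+\d y^2)$ — a positive-definite Riemannian metric conformal to the flat metric $\d x^2+\d y^2$ on the cut plane $\RR^2$. Hence the vanishing of $f^{,i}u_{,i}$ says precisely that $\nabla f$ and $\nabla u$ are orthogonal with respect to a metric in the conformal class of the Euclidean metric.

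The key observation is conformal invariance of orthogonality of vectors in dimension two: if $\tilde g_{ij}=\Omega^2\delta_{ij}$ then $\tilde g^{ij}\partial_i f\,\partial_j u = \Omega^{-2}(f_x u_x + f_y u_y)$, so $f^{,i}u_{,i}=0$ is equivalent to $f_x u_x + f_y u_y = 0$, which is orthogonality in the flat metric of the cut plane. Therefore the statement holds simultaneously for the orbit metric and for the Euclidean metric of $\RR^2$. Finally, at a regular point of a level set, the gradient is normal to the level set, so orthogonality of $\nabla f$ and $\nabla u$ is the same as orthogonality of the level curves $\{f=\text{const}\}$ and $\{u=\text{const}\}$; since the cut plane and the orbit space are locally conformally diffeomorphic, angles agree and the level sets intersect orthogonally in both.

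I would organise the proof as: (i) restrict attention to the orbit space, noting $u,f$ depend only on $x,y$; (ii) invoke the Proposition to get $f^{,i}u_{,i}=0$; (iii) rewrite this in isothermal coordinates and use the conformal factor to reduce it to $f_xu_x+f_yu_y=0$, the Euclidean orthogonality; (iv) conclude that the gradients — hence the level curves — are orthogonal in both the orbit metric and the flat metric of $\RR^2$. The only point requiring a word of care is excluding the degenerate locus: where $\nabla u$ or $\nabla f$ vanishes a level set may fail to be a smooth curve, so the orthogonality statement should be read as holding at regular points; since $u$ is a non-constant harmonic function its critical set is small, and similarly for $f$. I do not expect a genuine obstacle here — the substance is entirely in the Proposition already proved by direct computation from \eqref{eq:Papa:uws}, and the corollary is the two-dimensional conformal-invariance remark.
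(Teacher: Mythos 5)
Your proposal is correct and follows essentially the same route as the paper: the paper likewise derives the corollary from the preceding Proposition ($f^{,i}u_{,i}=0$), the fact that $u$ and $f$ depend only on the orbit-space coordinates, and the local conformal diffeomorphism between the orbit space and the cut plane, which makes orthogonality of the gradients (hence of the level curves) invariant under the conformal factor. Your added remark about restricting to regular points of the level sets is a reasonable precision that the paper leaves implicit.
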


\begin{corollary}
In the electrostatic picture, the level sets of $f$ coincide with 
the electric field lines corresponding to the potential $u$.
\end{corollary}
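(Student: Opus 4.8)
The plan is to translate the geometric statement into the electrostatic dictionary already set up in this section and then invoke Corollary~\ref{cor:u f ortho}. First I would recall that $u$ satisfies the cylindrical Laplace equation $u_{xx}+u_{yy}+u_x/x=0$ by the first line of~\eqref{eq:Papa:uws}, which is precisely the equation governing an axisymmetric electrostatic potential in $\RR^3$ written in Weyl coordinates $(x,y)$, with $x$ the distance from the axis of symmetry. So it is legitimate to think of $u$ as \emph{the} potential in the electrostatic picture; the level sets $u=\text{const}$ are the equipotential surfaces, exactly as stated in the discussion preceding this proposition.

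Next I would identify the electric field lines. The electric field is $-\nabla u$, and its integral curves are, by definition, the curves everywhere tangent to $-\nabla u$, equivalently everywhere orthogonal to the equipotentials $u=\text{const}$. In the meridian (cut) plane $\RR^2$ these field lines are therefore the orthogonal trajectories of the family $\{u=\text{const}\}$. By Corollary~\ref{cor:u f ortho}, the level sets of $f$ intersect the level sets of $u$ orthogonally in the cut plane, so the level sets of $f$ form precisely that family of orthogonal trajectories. Hence each curve $f=\text{const}$ coincides with an electric field line of the potential $u$, which is the assertion.

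A slightly more explicit route, if one prefers not to lean on the corollary, is to use the gradient relations in~\eqref{eq:Papa:uws} directly: they give $w_x=x u_y$, $w_y=-x u_x$, and since $w=f$ one has $\nabla f = x\,(u_y,-u_x)$, which is $x$ times the $90^\circ$ rotation of $\nabla u$. Thus $\nabla f$ is everywhere tangent to the level sets of $u$ and, dually, the level sets of $f$ are everywhere tangent to $\nabla u$, i.e.\ orthogonal to the equipotentials. That is exactly the defining property of field lines. (One should note that the conformal factor between the orbit metric and the flat cut-plane metric does not affect orthogonality, and that the extra factor $x$ multiplying the rotated gradient only reparametrises the field lines, not their images, so the geometric conclusion about the level sets is unaffected.)

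The main thing to be careful about is not a calculational obstacle but a matter of conventions: one must be sure that ``electric field lines'' is being used in its standard meaning (curves tangent to $-\nabla u$, hence transverse to equipotentials) and that the passage between the orbit space and $\RR^2$ preserves the relevant incidence and orthogonality data. Both of these have already been secured earlier — the first by the explicit form of the cylindrical Laplacian, the second by the remark that the orbit space and the cut plane are locally conformally diffeomorphic and by Corollary~\ref{cor:u f ortho} itself. So the proof is essentially a one-line deduction from that corollary, and I would present it as such.
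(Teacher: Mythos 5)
Your argument is correct and matches the paper's intent exactly: the corollary is left unproved there precisely because it is the immediate consequence of Corollary~\ref{cor:u f ortho} (orthogonality of the level sets of $u$ and $f$ in the cut plane) that you spell out, and your alternative direct check via $w_x = x u_y$, $w_y = -x u_x$ just makes explicit that $f$ is the Stokes stream function of the potential $u$. No gaps; the remarks about the conformal factor and the factor $x$ only reparametrising the field lines are the right things to be careful about.
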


\section{Relation to static Weyl vacuum metrics}
\label{sect:E-NK}

The following proposition can be easily verified by
straightforward computation.

\begin{proposition} 
\label{prop:Weyl}
The   dust and vacuum metrics~\eqref{eq:dust:ds:E} 
and~\eqref{eq:papa:ds:NK} are, respectively, the Ehlers and 
the Neugebauer--Kramer transform of the static Weyl vacuum metric
\begin{equation}
\label{eq:static}
\e^{u + p}(\d x^2 + \d y^2) + x^2 \e^u \d \phi^2 - \e^{-u} \d t^2.
\end{equation}
\end{proposition}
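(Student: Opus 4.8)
The plan is to verify Proposition~\ref{prop:Weyl} by applying the two transformations directly to the seed metric~\eqref{eq:static} and checking that the outputs coincide with~\eqref{eq:dust:ds:E} and~\eqref{eq:papa:ds:NK}. First I would record that the static Weyl metric~\eqref{eq:static} has the Lewis--Papapetrou form~\eqref{eq:Lewis:Papa} with the two Killing vectors $\partial_\phi$ and $\partial_t$, the $\phi$-$t$ block being diagonal: $h_{\phi\phi} = x^2\e^u$, $h_{tt} = -\e^{-u}$, $h_{\phi t} = 0$, and conformal factor $\e^{u+p}$. Because the metric is static, the timelike Killing vector $\partial_t$ is hypersurface-orthogonal, so its twist potential vanishes and the relevant Ernst potential is purely real, $\mathcal E = \e^{-u}$ (up to sign conventions), with the governing equation for $u$ being exactly the cylindrical Laplace equation $u_{xx}+u_{yy}+u_x/x=0$ appearing in~\eqref{eq:Papa:uws}.

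Next I would carry out the Ehlers transformation explicitly. On a static seed it produces a stationary metric whose $h$-block acquires an off-diagonal term; the standard recipe gives a new twist potential proportional to $u$ and, after reconstructing $h_{kl}$, precisely the combination $x^2\d\phi^2 - (f\d\phi+\d t)^2$ with $f$ determined by $f_x = x u_y$, $f_y = -x u_x$ — i.e.\ the equations~\eqref{eq:Papa:uws} that define $w=f$. The conformal factor transforms in a way that matches $\e^p$ with $p$ satisfying~\eqref{eq:dust:fp}; here one uses that the Ehlers transformation leaves the orbit-space metric $\tilde g_{ij}\d t^i \d t^j$ (equivalently the quantity $\e^p(\d x^2+\d y^2)$ after accounting for the $\sqrt{-\det h}$ rescaling built into the Lewis--Papapetrou normalization) invariant, which is the reason the same $p$ appears in both~\eqref{eq:dust:ds:E} and the seed. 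Then I would do the Neugebauer--Kramer (duality) transformation, which swaps the roles of the norm and twist potentials; applied to the same seed it yields~\eqref{eq:papa:ds:NK}, the factor $\cosh u$ arising from the normalization $v = 1/\cosh u$ of Section~\ref{sect:Papa} together with the relation $\e^q = \e^p\cosh u$ of~\eqref{eq:Papa:qsuv}.

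The one subtlety to be careful about — and the step I expect to be the main obstacle — is matching conventions: the Ehlers and Neugebauer--Kramer transformations are usually stated for the Ernst potential built from a single Killing vector and its twist, whereas here the group acts on the full $2\times 2$ block $h_{kl}$ via $\mathrm{SL}_2(\RR)$ (cf.\ the $\GL_2$-action~\eqref{eq:transf:GL2} restricted to $\SL_2$ in Section~\ref{sect:Papa}). I would fix this by identifying the specific $\SL_2(\RR)$ element realizing each transformation on $h_{kl}$, tracking how the conformal factor picks up its compensating power of $\sqrt{-\det h}$, and then simply comparing coefficients with~\eqref{eq:dust:ds:E} and~\eqref{eq:papa:ds:NK}. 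Everything else is a routine substitution: one checks that the first-order systems~\eqref{eq:Papa:uws} (respectively~\eqref{eq:dust:fp} and~\eqref{eq:Papa:wus}) are exactly the integrability/field equations produced by each transformation acting on a solution $u$ of the cylindrical Laplace equation, which confirms that the transforms of a Weyl seed are genuine dust and Papapetrou vacuum solutions, as claimed.
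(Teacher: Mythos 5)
The paper's own proof of Proposition~\ref{prop:Weyl} is the single sentence that it ``can be easily verified by straightforward computation,'' so your overall plan --- write the seed \eqref{eq:static} in Lewis--Papapetrou form, apply the two transformations as defined in the cited sources, and compare coefficients with \eqref{eq:dust:ds:E} and \eqref{eq:papa:ds:NK} --- is exactly the intended argument. Most of your identifications are right: $u$ satisfies the cylindrical Laplace equation, the first-order system \eqref{eq:Papa:uws} produces $f=w$ and $s=p$, and the $\cosh u$ factors in \eqref{eq:papa:ds:NK} come from $v=1/\cosh u$ and $\e^{q}=\e^{p}\cosh u$ of \eqref{eq:Papa:qsuv}.

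Two points in your plan would not survive the actual computation. First, the step you flag as the main obstacle is resolved the wrong way: neither the Ehlers transform of \cite[Theorem~21.1]{Book} nor the Neugebauer--Kramer map of \cite[\S~4.1]{DK-GN-1968} is realised by a constant $\SL_2$ matrix acting on $h_{kl}$ as in \eqref{eq:transf:GL2}. That action is merely a change of basis of the Killing vectors, hence preserves the Einstein tensor, so no such element could carry the vacuum seed \eqref{eq:static} to the dust metric \eqref{eq:dust:ds:E}; the Ehlers transform used here is the dust-generating map, not the vacuum-preserving M\"obius action on the Ernst potential, and looking for ``the specific $\SL_2(\RR)$ element'' would fail. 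Second, the orbit conformal factor is not invariant: it changes from $\e^{u+p}$ in the seed to $\e^{p}$ under Ehlers and to $\e^{p}\cosh u$ under Neugebauer--Kramer. What is preserved is the Weyl function $k=p/2$ in the parametrisation $\e^{2(k-U)}$ with $U=-u/2$; the compensating factor is the norm $\e^{-2U}$ of the timelike Killing vector, not $\sqrt{-\det h}$ (which equals $x$ on both sides). With these corrections the verification is indeed a routine substitution of $U=-u/2$, $k=p/2$ into the cited formulas, using \eqref{eq:Papa:uws}.
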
 

For the Neugebauer--Kramer transform of a static metric 
see~\cite[\S~4.1]{DK-GN-1968}, for the Ehlers transform of a 
static metric see~\cite[Theorem~21.1]{Book} or the original 
paper~\cite{JE-1959}.

Thus, the companion correspondence 
can be decomposed as follows:
\[
\unitlength = 0.95mm
\begin{picture}(140,22)(0,25)
\put(0,25){\framebox(20,20)[c]{}}
\put(0,20){\makebox(20,35)[c]{\small Papapetrou}}
\put(0,20){\makebox(20,25)[c]{\small vacuum}}
\thicklines
\put(40,35){\vector(-1,0){18}}
\put(40,35){\vector(1,0){18}}
\put(40,37){\makebox(0,0)[c]{\footnotesize Neugebauer--Kramer}}
\thinlines
\put(60,25){\framebox(20,20)[c]{}}
\put(60,20){\makebox(20,40.5)[c]{\small static}}
\put(60,20){\makebox(20,30.5)[c]{\small Weyl}}
\put(60,20){\makebox(20,20.5)[c]{\small vacuum}}
\thicklines
\put(100,35){\vector(-1,0){18}}
\put(100,35){\vector(1,0){18}}
\put(100,37){\makebox(0,0)[c]{\footnotesize Ehlers}}
\thinlines
\put(120,25){\framebox(20,20)[c]{}}
\put(120,20){\makebox(20,40.5)[c]{\small van}}
\put(120,20){\makebox(20,30.5)[c]{\small Stockum}}
\put(120,20){\makebox(20,20.5)[c]{\small dust}}
\end{picture}
\]

The last proposition opens a way to reuse static Weyl 
metrics and obtain dust clouds of a required shape $u =$ const.
The coefficient 
$\e^{-u}$ at $\d t^2$ is the relativistic analogue of the Newtonian 
gravitational potential
(possibly somewhat distorted~\cite{L-O-1998}).
Otherwise said, boundaries of van Stockum dust clouds in 
Papapetrou vacuum correspond to Newtonian equipotential surfaces
in static Weyl space-times, even unphysical ones
(permitting negative masses). 

Essentially, all static Weyl vacua are known and many particular 
cases have been studied
(e.g., \cite{JBG-JP-2009,L-O-1998,SS-1999,OS-2016,Book} 
and references therein). 
According to Proposition~\ref{prop:Weyl}, each yields a matching 
van Stockum--Papapetrou pair along with an explicit expression 
for the field variables~$u$ and~$p$, sufficient to compute
the dust density and admissible borders.
By contrast, computing the coefficients $h_{kl}$ requires 
a closed-form representation for $f$, which is not always available, 
because $f$ is determined by a path integral
which is not known analytically in many cases 
(see Examples~\ref{ex:BW} and~\ref{ex:AGP}), although 
its levels can be inferred from the levels of $u$ by orthogonality 
(Corollary~\ref{cor:u f ortho}).
Yet all invariant quantities can be computed without
explicit knowledge of~$f$, including the Petrov type~\cite[Ch.~4]{Book} 
and the curvature invariants~\cite[Ch.~9]{Book},
as well as the first-order invariants~\cite{MM-OS-2008}.
Obviously from~\eqref{eq:Papa:uws}, the derivatives of $f$ and $p$ 
are expressible via derivatives of $u$, although $f$ itself is not.
This reflects the fact that $f \mapsto f + c$ corresponds to a 
coordinate transformation.

\section{Examples}
\label{sect:ex}

\begin{example} \rm
Table~\ref{tab:I} provides information about four cases when the 
dust part or the vacuum part or both have been studied earlier.

\begin{table}[ht]
\newdimen\pb
\pb = 4.75em
\def\pbch#1{\parbox{\pb}{\raggedright\footnotesize #1}}

\begin{center}
\def\D#1{$\displaystyle#1$}
\def\tstrut{\vrule height 4ex depth 2.4ex width 0ex}
\hglue-1em
\begin{tabular}{c|@{\,\tstrut}|c|c|c|c}
\pbch{electro\-static analogue} 
 & $u$
 & \D{f = w}
 & \D{p = s}
 & \D{\rho}
\\\hline
\noalign{\vskip-1pt}
\hline
\pbch{point charge} 
 & \D{\frac{2}{R}}
 & \D{\frac{2 y}{R}} 
 & \D{\llap{$-$}\frac{x^2}{(x^2 + y^2)^2}} 
 & \D{\frac{4}{(x^2 + y^2)^2\,\e^p}}
\\\hline
\pbch{point dipole}
 & \D{\llap{$-$}\frac{2 y}{R^3}} 
 & \D{\frac{2 x^2}{R^3}} 
 & \D{\frac{x^2 (x^2 - 8 y^2)}{2 (x^2 + y^2)^4}}
 & \D{\frac{4 (x^2 + 4 y^2)}{(x^2 + y^2)^4\,\e^p}}
\\\hline
\pbch{infinite plate}
 & \D{2 y}
 & \D{x^2}
 & \D{\llap{$-$}x^2}
 & \D{4 \e^{x^2}}
\\\hline
\pbch{finite rod}
 & \D{\ln\left(\frac{y_+ + R_{\rm II}^+}{y_- + R_{\rm II}^-} \right)}
 & \D{R_{\rm II}^+ - R_{\rm II}^-}
 & \D{\ln\left(1 + \frac{N_{\rm II}}{R_{\rm II}^+ R_{\rm II}^-}\right)}
 & \D{\frac{(N_{\rm II} - R_{\rm II}^+ R_{\rm II}^-)^2}{2 b^2 x^4}}
\end{tabular}

$$
R = \sqrt{x^2 + y^2}, 
\quad
y_\pm = y \pm b, 
\quad
R_{\rm II}^{\pm} = \sqrt{\smash[b]{x^2 + y_\pm^2}},
\quad
N_{\rm II} = x^2 + y^2 - b^2.
$$
\end{center}
\caption{Examples of matching  
van Stockum dust to Papapetrou vacuum}
\label{tab:I}
\end{table}

From left to right, the columns harbour  
the axisymmetric charge geometry (see Section~\ref{sect:el}); 
the potential $u$ the levels of which determine the 
boundaries according to Theorem~\ref{thm:main}; 
the field variables $w$ and $s$ as computed
from system~\eqref{eq:Papa:uws} and equal, respectively, to
$f$ and $p$ of the companion metric; and the dust density 
\begin{equation}
\label{eq:mu:u}
\rho = \frac{u_x^2 + u_y^2}{\e^p}
\end{equation}
as obtained from~\eqref{eq:dust:rho} and~\eqref{eq:Papa:uws}.
The dust and vacuum metrics are given by 
equations~\eqref{eq:dust:ds:E} and~\eqref{eq:papa:ds:NK}, 
respectively. 
The dust's four-velocity is $\delta^i_4$
(comoving coordinates).
Unessential constant parameters are suppressed.

Let us comment on individual rows of Table~\ref{tab:I}.
For visualisations see the end of this section.

{\it 1. } 
The vacuum is the spinning metric due to Halilsoy~\cite{MH-1992}, 
which, contrary to its name, is not rotating~\cite{N-M-A-B-2020}.
For this reason, the matching surfaces $R =$ const are not 
interpretable as spheres. 
The static seed (see Section~\ref{sect:E-NK}) is the Chazy--Curzon 
metric.

{\it 2.} {\it The Bonnor--Bonnor cloud. }
The dust part is the well-known Bonnor metric~\cite{WBB-1977}.
The vacuum is due to Bonnor~\cite{WBB-2005} as well, but the finding
that both metrics match is new.
All boundaries traverse the singularity located at the centre.

{\it 3. }
\label{rem:wall}
The dust part is the Lanczos~\cite{KL-1924} metric 
(the ``cylindrical world''), 
which has been rediscovered by van Stockum~\cite{WJvS-1937} and
matched to a cylindrically symmetric Lewis~\cite{TL-1932}
vacuum metric 
along $x = $ const, yielding an infinite dust cylinder.
Our matching along $y =$ const breaks the cylindrical symmetry
and results in a thick wall of rigidly rotating dust extending 
to infinity 
(ignoring what happens when the density becomes too high).
This confirms that one and the same dust solution can match to 
different vacua along different boundaries 
(which is not true for perfect fluids, where the boundary 
occurs at the zero level of pressure).

{\it 4.} {\it The Zsigrai cloud. }
The Zsigrai~\cite{JZ-2003} metric results from gluing the 
Luk\'acs--Newman--Sparling--Winicour dust metric~\cite{L-N-S-W-1983}
to the vacuum NUT metric~\cite{N-T-U-1963}.
Matching is along isodensity surfaces since 
\[
\rho = \frac2{b^2}\sinh^4 \frac{u}{2}.
\]
in this case.
\end{example}

In the following two examples we reuse known toroidal static 
Weyl metrics to produce axisymmetric van Stockum--Papapetrou dust 
clouds (they satisfy the regular axis condition~\cite[\S~19.1]{Book}).
Although closed-form representations for $f$ are not available, we are 
able to understand the topologies the clouds can have.
All clouds are named after the static vacuum seed.


\begin{example} 
\label{ex:BW}
\rm {\it Rotating Bach--Weyl cloud. }
The seed is the static Bach--Weyl 
solution~\cite{B-W-1922}.
In Weyl coordinates, we can write (Semer\'ak~\cite[III.B]{OS-2016})
\[
\begin{aligned}
u &= \frac{4 m K(\Omega)}{\sqrt{(x + a)^2 + y^2}},
\qquad
\Omega = 2 \sqrt{\frac{a x}{(x + a)^2 + y^2}},
\\
p &= -\frac{m}{a^2} 
     \biggl(
       \frac{x^2 + y^2 + 3 a^2}{(x + a)^2 + y^2}\,K(\Omega)^2
       - 2\,K(\Omega)\,E(\Omega)
       + \frac{x^2 + y^2 - a^2}{(x - a)^2 + y^2}\,E(\Omega)^2
     \biggr),
\end{aligned}
\]
where $K,E$ denote the complete elliptic functions.  

Originally obtained as the gravitational field of a static ring,
it can be also obtained as an 
invariant solution with respect to the Lie symmetry
$2 x y u_x + (a^2 - x^2 + y^2) u_y + y u$, as can be easily checked.
The density~\eqref{eq:mu:u} is   
\[
\rho = \frac{4 m^2}{x^2\,\e^p} \biggl(
  \frac{K(\Omega)^2}{(x + a)^2 + y^2}
   - 2\,(a^2 - x^2 + y^2)
     \frac{K(\Omega)}{(x + a)^2 + y^2}\,
     \frac{E(\Omega)}{(x - a)^2 + y^2}
  + \frac{E(\Omega)^2}{(x - a)^2 + y^2}
\biggr).
\]
Boundaries can have one or two components, with a transient 
eightlike boundary passing through the saddle point $2 m \pi/a$.
\end{example}

\begin{example} 
\label{ex:AGP}
\rm {\it Rotating Appell--Gleiser--Pullin cloud. }
The seed is the static axisymmetric solution introduced by
Gleiser and Pullin~\cite{G-P-1989}, which incorporates
Appell's~\cite{PA-1887} harmonic function possessing a
circular singularity.
In Weyl's coordinates, the solution is determined by
\[
\begin{aligned} 
u &= \sqrt{\frac{\sqrt{4 a^2 y^2 + (x^2 + y^2 - a^2)^2} + x^2 + y^2 - a^2}
 {4 a^2 y^2 + (x^2 + y^2 - a^2)^2}},
\\
p &= -x^2 \frac{(x^2 + y^2 - 2 a y - a^2) (x^2 + y^2 + 2 a y - a^2)}
  {4 \bigl((x + a)^2 + y^2 \bigr)^2 \bigl((x - a)^2 + y^2 \bigr)^2}
 - \frac{x^2 + y^2 + a^2}{8 a^2\sqrt{(x - a)^2 + y^2}\sqrt{(x + a)^2 + y^2}}
\end{aligned}
\]
(Semer\'ak~\cite[III.C]{OS-2016}).
The density~\eqref{eq:mu:u} is
\[
\begin{aligned}
&\rho = \frac{a^2 - x^2 - y^2 + \sqrt{(a + x)^2 + y^2} \sqrt{(a - x)^2 + y^2}}
     {2 a^2 y^2 \bigl((a + x)^2 + y^2 \bigr)^2 \bigl((a - x)^2 + y^2\bigr)^2 \e^p}
\\&\quad\times\Bigl(
  \bigl((x^2 + y^2) (x^2 + y^2 - a^2) - 2 a^2 y^2
\bigr)
  \sqrt{(a + x)^2 + y^2} \sqrt{(a - x)^2 + y^2}
\\&\quad\quad\quad + (x^2 + y^2) (x^2 + y^2 - a^2)^2 + 4 a^4 y^2
  \Bigr).
\end{aligned}
\]
The function $u$ has two saddle points $(0,a)$, $(0,-a)$,
where it assumes the value $u(0,\pm a) = \sqrt2/2a$.
All level curves (boundaries) have two components, except the transient 
one, which is self-intersecting.
Curves entering the singularity have cusps there.
\end{example}

For visualisation see Appendix A.
A rich variety of admissible shapes can be seen.
We already noted the possibility of hollow Bonnor--Bonnor clouds.
The Bach--Weyl clouds can be also toroids, 
ovaloids containing an ovaloidal or toroidal hole,
toroids containing a toroidal hole.
Finally, the Appell--Gleiser--Pullin clouds can be
also two disjoint (nested) hollow ovaloids, corresponding to two 
different two-component level sets of $u$.

\section{Discussion}
\label{sect:disscussion}

What is really surprising is that the problem of matching 
van Stockum dust to vacuum has been waiting for solution so long,
considering the simplicity of the answer and the demand for it 
\cite{WBB-1982,NG-2009,SV-2007,Z-A-T-2007}. Not only are the Papapetrou 
and van Stockum metrics widely known, they also turn out to 
be rather natural candidates for matching.
A hint from physics is that asymptotically flat rotating Papapetrou 
metrics require a zero-mass source, see~\cite[\S~2.5]{JNI-book} 
or \cite[\S~20.3]{Book}, while the overall mass of the van Stockum dust 
is zero, since a negative mass singularity balances the positive mass of 
the dust (Bonnor~\cite{WBB-1977}, Bratek et al.~\cite{B-J-K-2007}).

That said, we must also admit that rotating Papapetrou vacua 
have no known physical interpretation other than being a zero-mass 
limit \cite{WBB-2005,AS-1971}, while serious doubts persist about 
whether van Stockum dust can exist in nature 
\cite{AC-1978,JF-1987,NG-2009,HP-2010,DRR-2015,Z-A-T-2007}. 
It is, however, no less true that negative masses have been 
admitted as constituents of relativistic models repeatedly during the 
last decades, suggesting that van Stockum metrics can avoid the fate 
of being unphysical.  
As a case in point, Ilyas et al.~\cite{I-Y-M-B-2017} proposed a 
measurement to identify possible occurrence of the Bonnor 
dust~\cite{WBB-1977} in a galaxy centre.
Anyhow, investigation of wider classes of dust-vacuum 
$C^1$-metrics, of which our Papapetrou--van Stockum class would be 
a limiting case, is under way.

\section*{Acknowledgements}

The author is grateful to R.~Unge and J.~Novotn\'y for valuable 
discussions and advice. 
This research received support from M\v{S}MT under RVO 47813059.

%
\section*{References}

\catcode`/=12\catcode`_=8

\appendix

\section{Visualisation}

Figure~A.1 shows three examples in axial section.

\begin{figure}[h]
\begin{center}
\includegraphics[scale=1.1]{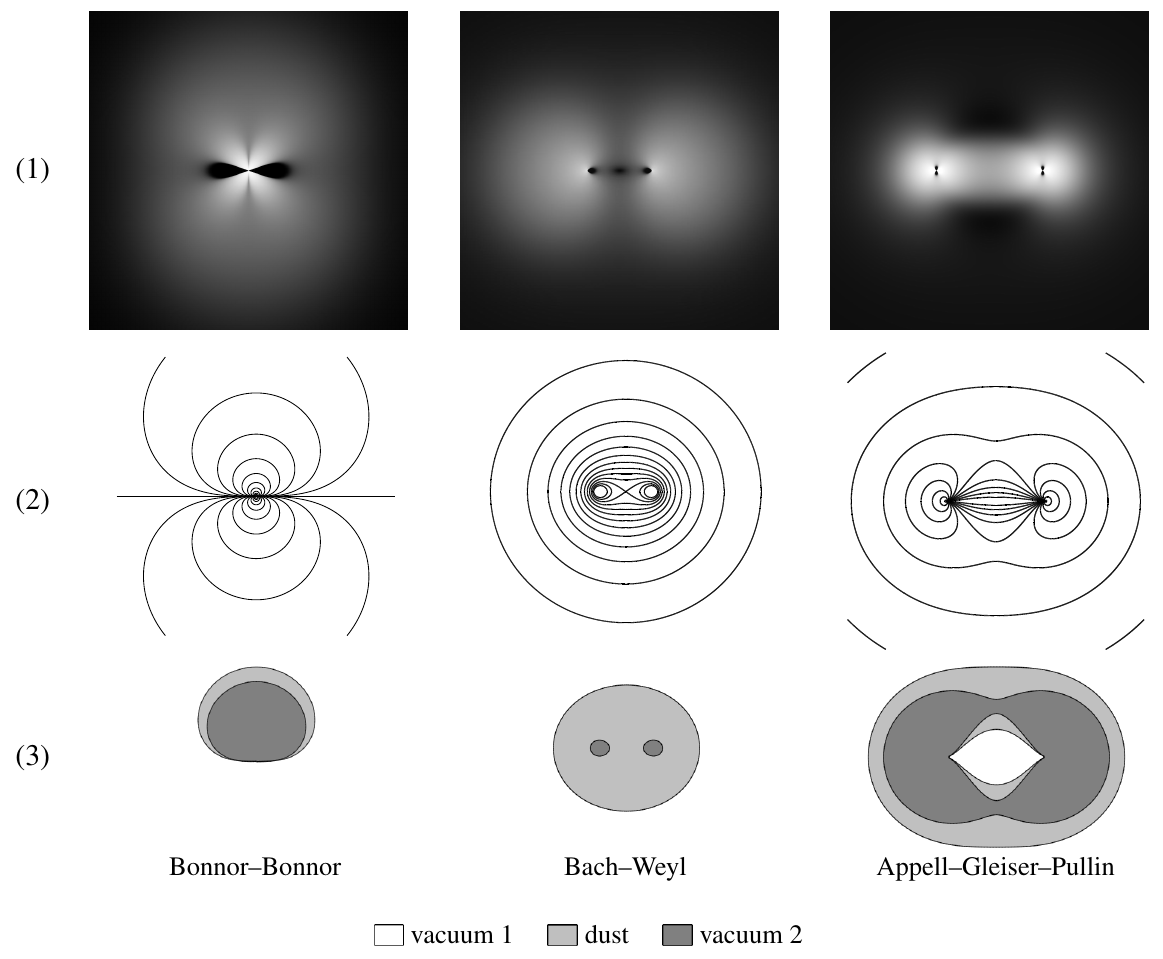}
\end{center}
\caption{\ Axisymmetric dust clouds examples in Weyl's coordinates.
(1)~Densities (increasing from dark to light).
(2)~Admissible boundaries. 
(3)~Example clouds.} 
\label{fig:clouds}
\end{figure}

The eightlike voids (to be viewed zoomed in) 
reveal directional singularities, which are
artifacts of the Weyl coordinates (Taylor~\cite{JPWT-2005}).


\begin{thebibliography}{99}
\def\mathunder{_}
\catcode`/=\active
\catcode`_=\active
\def/{\slash}
\def_{\ifmmode\let\next\_\mathunder\else\let\next\_\fi \next}
\def\doi#1{https://doi.org\slash #1}


\bibitem{PA-1887}
P. Appell,
Quelques remarques sur la th\'eorie des potentiels multiformes,
{\it Math. Ann.} {\bf 30} (1887) 155--156,
\doi{10.1007/BF01564536}.

\bibitem{B-W-1922}
R. Bach and H. Weyl, 
Neue L\"osungen der Einsteinschen Gravitationsgleichungen. B. Explizite Aufstellung statischer axialsymmetrischer Felder, 
{\it Math. Z.} {\bf 13} (1922) 134--145
[English translation: G.~Neugebauer and D.~Petroff, 
{\it Gen. Rel. Grav.} {\bf 44} (2012) 817--832],
\doi{10.1007/s10714-011-1312-5}.

\bibitem{WBB-1977} 
W.B. Bonnor, 
A rotating dust cloud in general relativity,
{\it J. Phys. A. Math. Theor.} {\bf 10} (1977) 1673--1677,
\doi{10.1088/0305-4470/10/10/004}.

\bibitem{WBB-1982} 
W.B. Bonnor, 
Globally regular solutions of Einstein's equations,
{\it Gen. Rel. Grav.} {\bf 14} (1982) 807--821,
\doi{10.1007/BF00756798}.

\bibitem{WBB-2005}
W.B. Bonnor,
An exact solution for a rotating body with negligible mass,
{\it Gen. Rel. Grav.} {\bf 37} (2005) 1145--1149,
\doi{10.1007/s10714-005-0097-9}.

\bibitem{B-J-K-2007} 
\L. Bratek, J. Ja\l{}ocha and M. Kutschera, 
Van Stockum--Bonnor spacetimes of rigidly rotating dust, 
{\it Phys. Rev. D} {\bf 75} (2007) 107502, 
\doi{10.1103/PhysRevD.75.107502}.

\bibitem{JBG-JP-2009}
J.B. Griffiths and J. Podolsk\'y, 
{\it Exact Space-Times in Einstein's General Relativity}
(Cambridge Univ. Press, Cambridge, 2009).   

\bibitem{AC-1978} 
A. Caporali,
Non-existence of stationary, axially symmetric, asymptotically 
flat solutions of the Einstein equations for dust,
{\it Phys. Lett. A} {\bf 66} (1978) 5--7.

\bibitem{DCF-MM-2020}
D. Catalano Ferraioli and M. Marvan,
The equivalence problem for generic four-dimensional metrics 
with two commuting Killing vectors,
{\it Ann. Mat. Pura Appl.} {\bf 199} (2020) 1343--1380,
\doi{10.1007/s10231-019-00924-y}.

\bibitem{SSC-1955} 
Shiing-Shen Chern,
An elementary proof of the existence of isothermal parameters on 
a surface.
{\it Proc. Amer. Math. Soc.} {\bf 6} (1955) 771--782, 
\doi{10.1090/S0002-9939-1955-0074856-1}.

\bibitem{JE-1959} 
J. Ehlers,
Transformations of static exterior solutions of Einstein's 
gravitational field equations into different solutions by means 
of conformal mappings,
in: {\it Les th\'eories relativistes de la gravitation}, 
Proc. Conf. Royaumont, 1959
(\'Editions du CNRS, Paris, 
1962) 275--284.

\bibitem{AE-DPS-2013}
A. Enciso and D. Peralta-Salas,
Submanifolds that are level sets of solutions to a second-order 
elliptic PDE,
{\it Adv. Math.} {\bf 249} (2013) 204--249,
\doi{10.1016/j.aim.2013.08.026}.

\bibitem{JF-1987} 
J. Frauendiener, 
Non-existence of stationary, axisymmetric dust solutions of 
Einstein's equations on spatially compact manifolds,
{\it Phys. Lett. A} {\bf 120} (1987) 119--123,
\doi{10.1016/0375-9601(87)90710-9}.

\bibitem{NG-2009} 
N. G\"urlebeck, The interior solution of axially symmetric, 
stationary and rigidly rotating dust configurations,
{\it Gen. Rel. Grav.} {\bf 41} (2009) 2687--2696,
\doi{10.1007/s10714-009-0796-8}.

\bibitem{RG-1971} 
R.~Geroch, A method for generating solutions of Einstein's 
equations, 
{\it J. Math. Phys.} {\bf 12} (1971) 918--924,
\doi{10.1063/1.1665681}.

\bibitem{G-P-1989}
R.J. Gleiser and J.A. Pullin, 
Appell rings in general relativity, 
{\it Class. Quantum Grav.} {\bf 6} (1989) 977--985,
\doi{10.1088/0264-9381/6/7/005}.

\bibitem{MH-1992} 
M. Halilsoy,
New metrics for spinning spheroids in general relativity,
{\it J. Math. Phys.} {\bf 33} (1992) 4225--4230, 
\doi{10.1063/1.529822}.

\bibitem{H-W-1953} 
P. Hartman and A. Wintner,
On the existence of Riemannian manifolds which cannot carry 
non-constant analytic or harmonic functions in the small,
{\it Amer. J. Math.} {\bf 75} (1953) 260--276,
\doi{10.2307/2372453}.

\bibitem{H-V-1979} 
C. Hoenselaers and C.V. Vishveshwara,
Interiors with relativistic dust flow,
{\it J. Phys. A: Math. Gen.} {\bf 12} (1979) 209--213,
\doi{10.1088/0305-4470/12/2/008}.

\bibitem{I-Y-M-B-2017} 
B. Ilyas, J. Yang, D. Malafarina and C. Bambi,
Observational properties of rigidly rotating dust 
configurations,
{\it Eur. Phys. J. C} {\bf 77} (2017) 461, pp.~9,
\doi{10.1140/epjc/s10052-017-5014-3}.

\bibitem{JNI-book} 
J.N. Islam, 
{\it Rotating Fields in General Relativity}
(Cambridge Univ. Press, Cambridge, 1985),


\bibitem{AKorn-1914} 
A. Korn, 
Zwei Anwendungen der Methode der sukzessiven Ann\"aherungen,
{\it Mathematische Abhandlungen Hermann Amandus Schwarz} (1914) 
215--229.

\bibitem{DK-GN-1968} 
D. Kramer and G. Neugebauer, 
Zu axialsymmetrischen station\"aren L\"osungen der Einsteinschen 
Feld\-gleichungen f\"ur das Vakuum,
{\it Commun. Math. Phys.} {\bf 10} (1968) 132--139,
\doi{10.1007/BF01654237}.

\bibitem{KL-1924} 
K. Lanczos,
\"Uber eine station\"are Kosmologie im Sinne der Einsteinschen 
Gravitationstheorie,
{\it Zeitschrift f\"ur Physik} {\bf 21} (1924) 73--110
\doi{10.1007/BF01328251}.

\bibitem{L-O-1998}
P.S. Letelier and S.R. Oliveira,
Superposition of Weyl solutions: the equilibrium forces,
{\it Class. Quantum Grav.} {\bf 15} (1998) 421--433,
\doi{10.1088/0264-9381/15/2/015}.

\bibitem{TL-1932} 
T.~Lewis, Some special solutions of the equations
of axially symmetric gravitational fields, 
{\it Proc. R. Soc. London A} {\bf 136} (1932) 176--192, 
\doi{10.2307/95761}.

\bibitem{AL-1955} 
A. Lichnerowicz, 
{\it Th\'eories relativistes de la gravitation et de 
l'\'electro\-magn\'etisme} 
(Masson, Paris, 1955).

\bibitem{LL-1911} 
L. Lichtenstein,
Beweis des Satzes, da\ss\ jedes hinreichend kleine, im 
wesentlichen stetig gekr\"ummte, singularit\"atenfreie 
Fl\"achenst\"uck auf einen Teil einer Ebene zusammenh\"angend 
und in den kleinsten Teilen \"ahnlich abgebildet werden kann, 
{\it Berl. Abh.} (1911) 1--49.

\bibitem{L-N-S-W-1983} 
B. Luk\'acs, E.T. Newman, G. Sparling and J. Winicour, 
A NUT-like solution with fluid matter,
{\it Gen. Rel. Grav.} {\bf 15} (1983) (6) 567--579,
\doi{10.1007/BF00759571}.

\bibitem{MP-S-1993}
M. Mart\'{\i}n-Prats and J.M.M. Senovilla,
Matching of stationary axisymmetric space-times,
in: F.J. Chinea and L.M. Gonz\'alez-Romero, eds.,
{\it Rotating Objects and Relativistic Physics},
Proc. El Escorial Summer School on Gravitation and General 
Relativity, Spain, 24--28 August 1992,
Lecture Notes in Physics 423
(Springer, Berlin, 1993)
136--140,
\doi{10.1007/3-540-57364-X\_209}.

\bibitem{MM-OS-2008}
M.~Marvan and O. Stol\'{\i}n, 
On local equivalence problem of spacetimes with two 
orthogonally transitive commuting Killing fields,
{\it J. Math. Phys.} {\bf 49} (2008) 022503 (pp~17),
\doi{10.1063/1.2831904}.

\bibitem{N-M-A-B-2020}
B. Narzilloev, D. Malafarina, A. Abdujabbarov and C. Bambi,
On the properties of a deformed extension of the NUT space-time,
{\it Eur. Phys. J. C} {\bf 80} (2020) 784, 
\doi{10.1140/epjc/s10052-020-8370-3}.

\bibitem{N-T-U-1963}
E. Newman, L. Tamburino and T. Unti,
Empty-space generalization of the Schwarzschild metric,
{\it J. Math. Phys.} {\bf 4} (1963) 915--923.
\doi{10.1063/1.1704018}

\bibitem{AP-1953} 
A. Papapetrou, 
Eine rotationssymmetrische L\"osung in der allgemeinen 
Relativit\"atstheorie,
{\it Annalen der Physik} {\bf 447} (1953) 309--315,
\doi{10.1002/andp.19534470412}.

\bibitem{HP-2010} 
H. Pfister, 
Do rotating dust stars exist in general relativity?,
{\it Class. Quantum Grav.} {\bf 27} (2010) 105016,
\doi{10.1088/0264-9381/27/10/105016}.

\bibitem{DRR-2015}
D.R. Rowland,
On claims that general relativity differs from Newtonian physics 
for self-gravitating dusts in the low velocity, weak field limit,
{\it Int. J. Mod. Phys. D} {\bf 24} (2015) 15500650,
\doi{10.1142/S0218271815500650}.

\bibitem{AS-1971}
A. Sackfield,
Physical interpretation of N.U.T. metric,
{\it Proc. Camb. Phil. Soc.} {\bf 70} (1971) 89--94,
\doi{10.1017/S0305004100049707}.


\bibitem{SS-1999}
S.M. Scott, 
A survey of the Weyl metrics,
in: R. Bartnik, ed.,
{\it Conference on Mathematical Relativity, Canberra, 1988},
Proc. Centre Math. Anal. Austral. Nat. Univ. 19 
(Austral. Nat. Univ., Canberra, 1989) 175--195.

\bibitem{OS-2016}
O. Semer\'ak, 
Static axisymmetric rings in general relativity: How diverse they are,
{\it Phys. Rev. D} {\bf 94} (2016) 104021,
\doi{10.1103/PhysRevD.94.104021}. 

\bibitem{Book} 
H. Stephani, D. Kramer, M. MacCallum, C. Hoenselaers and E. Herlt, 
{\it Exact Solutions of Einstein's Field Equations,} 2nd ed.
(Cambridge University Press, Cambridge, 2003).

\bibitem{WJvS-1937} 
W.J. van Stockum, 
The gravitational field of a distribution of particles rotating 
about an axis of symmetry,
{\it Proc. Roy. Soc. Edinburgh} {\bf 57} (1937) 135--154,
\doi{10.1017/S0370164600013699}.

\bibitem{JPWT-2005}
J.P.W. Taylor, 
Unravelling directional singularities,
{\it Class. Quantum Grav.} {\bf 22} (2005) 4961--4970,
\doi{10.1088/0264-9381/22/23/003}.

\bibitem{SV-2007} 
S. Viaggiu, 
Rigidly rotating dust solutions depending upon harmonic functions,
{\it Class. Quantum Grav.} {\bf 24} (2007) (10) 2755--2760,
\doi{10.1088/0264-9381/24/10/018}.

\bibitem{GW-1968} 
G. Williams, 
Motions in relativistic spaces,
{\it J. Math. Anal. Appl.} {\bf 22} (1968) 646--657, 
\doi{10.1016/0022-247X(68)90203-5}.

\bibitem{HW-1917} 
H. Weyl,
Zur Gravitationstheorie,
{\it Ann. Phys.} {\bf 54} (1917) 117--145.

\bibitem{Z-A-T-2007} 
T. Zingg, A. Aste and D. Trautmann,
Just dust: About the (in)applicability of rotating dust 
solutions as realistic galaxy models,
{\it Adv. Stud. Theor. Phys.} {\bf 1} (2007) 409--432.

\bibitem{JZ-2003} 
J. Zsigrai,
Ellipsoidal shapes in general relativity: general definitions and 
an application,
{\it Class. Quantum Grav.} {\bf 20} (2003) 2855--2870,
\doi{10.1088/0264-9381/20/13/330}.

\end{thebibliography}
\end{document}